\documentclass[conference]{IEEEtran}

\usepackage{amsmath,amsfonts,amssymb,amsthm}
\usepackage{algorithmic}
\usepackage{array}
\usepackage{subcaption}
\usepackage{graphicx}
\usepackage{multirow}
\usepackage{textcomp}
\usepackage{stfloats}
\usepackage{url}
\usepackage{verbatim}
\usepackage{graphicx}
\usepackage{color, soul}
\usepackage{bm}

\usepackage{cite}
\usepackage{xcolor}
\usepackage{xpatch}

\newtheorem{theorem}{Theorem}

\usepackage{algorithmic}
\usepackage[linesnumbered,ruled]{algorithm2e}
\usepackage{cuted}

\hyphenation{op-tical net-works semi-conduc-tor IEEE-Xplore}
\def\BibTeX{{\rm B\kern-.05em{\sc i\kern-.025em b}\kern-.08em
    T\kern-.1667em\lower.7ex\hbox{E}\kern-.125emX}}
\usepackage{balance}

\begin{document}
% \title{Digital Twin-based 3D Map Management for Edge-assisted Device Pose Tracking in Mobile AR}
% \title{User-centric Service Provisioning for Edge-assisted Mobile AR: A Digital Twin-based Approach}
\title{QoE-oriented Communication Service Provision for Annotation Rendering in Mobile Augmented Reality}

{
    \author{
    \IEEEauthorblockN{Lulu~Sun\IEEEauthorrefmark{1}, Conghao~Zhou\IEEEauthorrefmark{1},~Shisheng~Hu\IEEEauthorrefmark{2}, Yupeng Zhu\IEEEauthorrefmark{1},~Nan~Cheng\IEEEauthorrefmark{1}\IEEEauthorrefmark{3}, and Xu (Tony) Xia\IEEEauthorrefmark{4}
        \IEEEauthorblockA{
          \IEEEauthorrefmark{1}School~of~Telecommunications Engineering,~Xidian University,~China
        \\\IEEEauthorrefmark{2}Department~of~Electrical~and~Computer~Engineering,~University~of~Waterloo,~Canada
        \\\IEEEauthorrefmark{3}The State Key Laboratory of ISN,~Xidian University,~China
        \\\IEEEauthorrefmark{4}China Telecom Research Institute,~China
        \\ \{lulusun, yupeng\}@stu.xidian.edu.cn, \{conghao.zhou dr.nan.cheng\}@ieee.org, s97hu@uwaterloo.ca, xiaxu@chinatelecom.cn}
            }
}

\maketitle

\begin{abstract}

As mobile augmented reality (MAR) continues to evolve, future 6G networks will play a pivotal role in supporting immersive and personalized user experiences. In this paper, we address the communication service provision problem for annotation rendering in edge-assisted MAR, with the objective of optimizing spectrum resource usage while ensuring the required quality of experience (QoE) for MAR users. To overcome the challenges of user-specific uplink data traffic patterns and the complex operational mechanisms of annotation rendering, we propose a digital twin (DT)-based approach. We first design a DT specifically tailored for MAR applications to learn key annotation rendering mechanisms, enabling the network controller to access MAR application-specific information. Then, we develop a DT-based QoE modeling approach to capture the unique relationship between individual user QoE and spectrum resource demands. Finally, we propose a QoE-oriented resource allocation algorithm that decreases resource usage compared to conventional network slicing-based approaches. Simulation results demonstrate that our DT-based approach outperforms benchmark approaches in the accuracy and granularity of QoE modeling.

\end{abstract}

% \begin{IEEEkeywords}
% Augmented reality, annotation rendering, service provision, 6G, digital twin. 
% \end{IEEEkeywords}

\section{Introduction}

The international telecommunication union has published the ``IMT-2030~(6G) Framework'' to guide the standardization of future 6G communication networks, envisioning immersive communications as an extension of existing enhanced Mobile Broadband (eMBB) communications~\cite{cheng2024toward}. Mobile augmented reality (MAR) as a representative form of immersive communications, aims to provide users with rich and immersive experiences by blending physical and virtual worlds on portable or mobile devices,~e.g.,~smart glasses~\cite{chen2023adaptslam}. To achieve this, a key procedure involves annotation rendering, which entails superimposing virtual content onto the real-world environment as viewed through an MAR device. Currently, the widespread implementation of annotation rendering in MAR still heavily relies on substantial communication network supports~\cite{zhou2024user_wcm}. Specifically, annotation rendering requires real-time estimation of MAR device pose to ensure virtual content are accurately placed from the correct location and perspective relative to the user's viewpoint. This necessitates substantial computing and data storage resources~\cite{huzaifa2021illixr,ma2023nomore,hu2023adaptive}. Therefore, both academia and industry have concentrated on the edge-assisted MAR paradigm, leveraging the adequate resources of edge servers through communication networks. 

Existing communication networks encounter two key challenges of network resource management in supporting annotation rendering for edge-assisted MAR. First, each MAR device uploads its captured camera frames, to an edge server for device pose estimation and the delivery of the required virtual content. Since the real-world environment and human movement patterns, such as head turning, significantly vary for each user, the number of camera frames required to support annotation rendering in edge-assisted MAR is user-specific~\cite{ran2020multi}. However, in the service-oriented, e.g., slicing-based, network architecture currently employed in 5G networks, differences in data traffic patterns among individual devices running the same MAR application cannot be effectively distinguished. As a result, service-oriented network resource management approaches may reduce the level of immersion of certain users in MAR annotation rendering~\cite{zhou2024digital}. Second, current MAR service provisioning from a quality of service (QoS) perspective, emphasizing network-related factors including data rate. However, beyond conventional network-related factors, numerous application-related factors can influence user experience in MAR applications. Specifically, even with identical network resource allocation, users may experience markedly different levels of immersion, as MAR applications employ different operational mechanisms in annotation rendering to accommodate diverse human behaviors and real-world environments~\cite{linowes2017augmented}. Therefore, conventional QoS-oriented approaches fail to accurately capture the demands of individual MAR users through comprehensively analyzing both network-related and application-related factors.  

Recent studies have explored to support extended reality (XR), an umbrella term encompassing MAR, from the networking and communication perspective~\cite{zhou2024user_wcm}. Zhou~\emph{et~al.} have investigated user-centric communication service provision to satisfy the delay requirements necessary for enabling timely device pose tracking in edge-assisted MAR~\cite{zhou2024digital}. Zhao~\emph{et~al.} have investigated data packet scheduling for multiple XR users, focusing on ensuring the probability of successful transmission within strict delay constraints. While these studies have accounted for QoS differences across users, the optimized QoS metrics often fail to accurately capture user satisfaction in MAR. A few works have paid efforts on QoE modeling and QoE-oriented communication service provision for XR video transmission~\cite{pan2024quality,feng2023qoe}. However, the impact of annotation rendering mechanism in MAR is overlooked in existing studies on communication service provision.

% Bojovic~\emph{et~al.}~\cite{bojovic2023enhancing} have proposed a mechanism to adapt XR video codec to dynamic wireless channel conditions to satisfy QoS requirements. Meanwhile, some works have paid efforts on QoE modeling and QoE-oriented communication resource management for XR video transmission~\cite{pan2024quality,feng2023qoe}.

% Manjunath~\emph{et~al.} have proposed a machine learning-based approach to classify different XR applications according to their associated data traffic patterns~\cite{manjunath2024discern}. 

In this paper, we investigate a communication service provision problem for annotation rendering in edge-assisted MAR. Our objective is to minimize the radio spectrum resource consumed by multiple MAR devices while meeting their requirements for the virtual content hit rate (VCHR), a QoE metric employed for annotation rendering~\cite{ren2020edge}. Different from existing works, we propose the establishment of a digital twin (DT) for each individual MAR device. A DT captures the unique relationship between user QoE and network resource demands. The main contributions of this paper are as follows:
    \begin{itemize}
        \item We define a function of the DT, specifically designed for MAR, to replicate key operational mechanisms of annotation rendering, which enables the network controller to access MAR application-specific information from the network side.

        \item We define a function of the DT to capture the unique impact of radio spectrum resource management on individual user's VCHR, which offers enhanced granularity in QoE modeling compared to conventional user-agnostic approaches.

        \item We develop a QoE-oriented communication service provision algorithm, which enhances radio spectrum resource efficiency compared to the existing slicing-based service provision used for 5G.  

        % \item We establish a personalized hierarchical data model, organizing data attributes carefully chosen for MAR, \bl{to capture the implicit impact of the MAR operational mechanism on the uplink data traffic of an MAR user.} 

        % \item We propose two machine learning-based methods with different complexities for data traffic modeling. In addition, we design an easy-to-use mechanism \bl{for switching between the two methods} to adapt to non-stationary uplink data traffic in MAR.  

        % \item We derive a closed-form resource reservation solution to a service provision problem for an individual MAR device, considering potential inaccuracies in the data-driven traffic modeling, which enhances the robustness of the DT-based service provision approach.
    \end{itemize}

\section{System Model and Problem Formulation}

\subsection{Considered Scenario}

    \begin{figure}[t]
        \centering
        \includegraphics[width=0.32\textwidth]{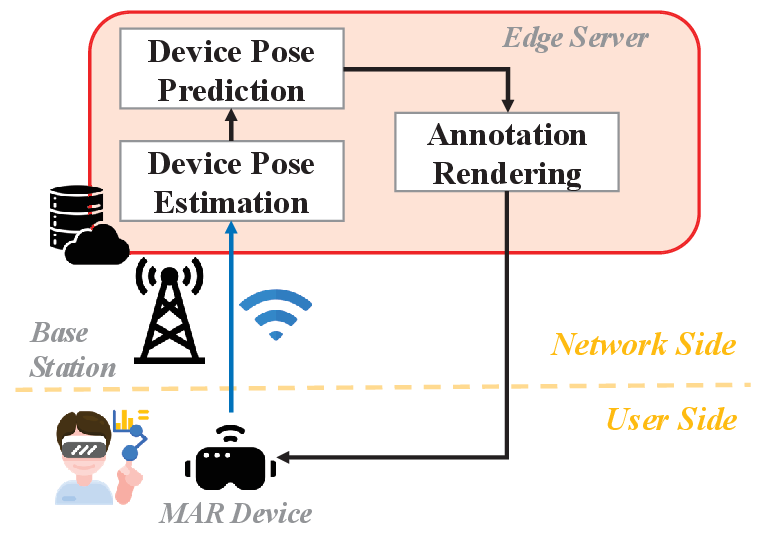}
        \caption{The considered scenario of edge-assisted MAR.}\label{scenario}
        % \vspace{-0.5cm}
    \end{figure}

We consider a scenario where multiple users, utilizing their respective MAR devices to run MAR applications, move freely within the communication coverage of a base station (BS). Each MAR device captures camera frames periodically with a fixed frame rate while rendering 3D virtual content within the MAR user’s field of view~\cite{campos2021orb}. An edge server is deployed at the BS to ensure accurate annotation rendering, providing MAR users with immersive experiences. As shown in Fig.~\ref{scenario}, the edge server is responsible for conducting three key modules to support MAR:
\begin{itemize}
    \item \textbf{Device pose estimation --} The edge server supports estimating the current pose of each MAR device based on the captured camera frames. Each MAR device periodically uploads camera frames along with related data, e.g., timestamps, to the edge server. The edge server then estimates the device poses corresponding to the uploaded camera frame based on the uploaded data~\cite{zhou2024digital,chen2023adaptslam}.  

    \item \textbf{Device pose prediction --} The edge server predicts the future poses of each MAR device based on the history trajectory of its estimated device poses. The predicted device poses are used to determine the possible future viewpoints of MAR users for proactively preparing virtual content~\cite{han2020vivo}.  

    \item \textbf{Annotation rendering --} The edge server delivers a portion of virtual content within each MAR user's field of view to their respective MAR device, where it is rendered by overlaying the virtual content onto the camera frame newly captured by the MAR device.

\end{itemize}

Denote the set of MAR devices by~$\mathcal{U}$, and let~$\mathcal{F}_{u}$ denote the set of camera frames captured by MAR device~$u \in \mathcal{U}$.

\subsection{Communication Model} 

Timely uploading camera frames to the edge server requires proper radio spectrum resource allocation. Let~$r_{u}$ denote the uplink data rate of MAR device~$u$ when camera frame~$f$ is captured, given by:
    \begin{equation}\label{eq1}
        r_{u,f} = b_{u} \log (1 + \gamma_{u,f}), \,\, \forall f \in \mathcal{F}_{u}, u \in \mathcal{U},
    \end{equation}
where~$b_{u}$ denotes the amount of radio spectrum resource allocated to MAR device~$n$ for uplink communication, and~$\gamma_{u,f}$ represents a random variable corresponding to the signal-to-noise ratio experienced by MAR device~$u$ when camera frame~$f$ is captured. Let~$\alpha$ denote the volume (in bits) of data uploaded to the edge server for estimating the device pose corresponding to a camera frame, assuming an identical data volume for device pose estimation across all camera frames. We denote by~$S_{u,f}$ the duration of uplink transmission for camera frame~$f$ at MAR device~$u$, and the average uplink transmission latency of MAR device~$u$ for a camera frame, denoted by~$\mathbb{E}[S_{u}]$, is as follows: 
    \begin{equation}\label{}
        \mathbb{E}[S_{u}] = \mathbb{E}_{f} \left[\frac{\alpha}{r_{u,f}} \right],\,\, \forall u \in \mathcal{U},
    \end{equation}
where the signal-to-noise ratio remains constant during data transmission for each camera frame but may vary between different camera frames.

Due to the limited radio spectrum and computing resources for estimating the device pose corresponding to each camera frame, a portion of all captured camera frames can be selected to be uploaded the edge server~\cite{chen2023adaptslam}. Considering that each MAR device captures camera frames periodically, we assume that the camera frames to be uploaded by each MAR device are also selected uniformly. Let~$\mathcal{F}^\text{s}_{u} \subseteq \mathcal{F}_{u}$ denote the set of camera frames uploaded by MAR device~$u$, and let~$\lambda_{u}$ represent the upload frequency (in Hz) for the data associated with the set of camera frames~$\mathcal{F}^\text{s}_{u}$. Based on the D/G/1 queuing model, the average latency from when a camera frame is selected by MAR device~$n$ to when it is fully uploaded, denoted by~$\tau_{u}$, satisfies the following constraint:
    \begin{equation}\label{}
        \tau_{u} \approx \frac{\lambda_{u} \mathbb{E}[S^{2}_{u}] }{ 2(1 - \lambda_{u}\mathbb{E}[S_{u}]) } \leq T,\,\, \forall u \in \mathcal{U},
    \end{equation}
where~$T$ denotes the maximum tolerable latency for uploading data associated with each camera frame from an MAR device.

\subsection{Annotation Rendering Model}

In MAR applications, overlaying virtual content onto the physical environment, i.e., a camera frame, relies on the viewpoint of users, which is determined by the 6\,DoF pose of their respective MAR devices. Let $\mathbf{q}_{u, f} = [t^{x}_{u,f}, t^{y}_{u,f}, t^{z}_{u,f}, \theta^{x}_{u,f}, \theta^{y}_{u,f}, \theta^{z}_{u,f}]^{\top}$ denote the estimated 6\,DoF pose of MAR device~$u$ when camera frame~$f$ is captured, comprising 3\,DoF for translational movement and 3\,DoF for rotational movement (i.e., yaw, pitch, and roll)~\cite{ran2020multi,chen2023adaptslam}. Since the annotation rendering of 3D virtual content is time-consuming, a device pose needs to be predicted proactively. Denote~$\hat{\mathbf{q}}_{u, f+W}$ the predicted pose of MAR device~$n$ when camera frame~$f+W$ is captured, given by:  
    \begin{equation}\label{eq4}
        \hat{\mathbf{q}}_{u, f+W} = G_{u}(\mathcal{Q}_{u,f}), \,\,\,\, \forall u \in \mathcal{U},
    \end{equation}
where the set of input device poses~$\mathcal{Q}_{u}$ that are selected by down-sampling from the original camera frame sequence when~camera frame~$f \in \mathcal{F}$ is captured, is as follows:
    \begin{equation}\label{}
        \mathcal{Q}_{u,f} = \left\{ \mathbf{q}_{u, k} | f-H < k \lambda_{u} \leq f, \forall k \in \mathbb{Z}^{+} \right\}.
    \end{equation}
Terms~$W$ and~$H$ denote the lookahead and history windows, respectively, and the function~$G_{u}(\cdot)$ represents a device pose prediction model, e.g., a deep neural network, for MAR device~$u$, which takes as input the estimated device poses corresponding to camera frames from~$f-H$ to~$f$.  

The 3D virtual content is spatially segmented into smaller 3D \emph{cells}, each individually encoded and capable of being fetched separately for delivery and rendering~\cite{han2020vivo}. Given a predicted device pose~$\hat{\mathbf{q}}_{u, f}$, the set of 3D cells required for annotation rendering can be determined, which is denoted by~$\mathcal{C}(\hat{\mathbf{q}}_{u, f})$. Due to the device pose prediction error, the virtual content rendered on the MAR device and the 3D virtual content may differ from the 3D virtual content that aligns with the MAR user's actual viewpoint. To measure such a difference for camera frame~$f$, the virtual content hit rate (VCHR) of MAR device~$u$ is defined as follows:
    \begin{equation}\label{}
        h_{u,f} = \frac{|\mathcal{C}(\mathbf{q}_{u, f}) \cap \mathcal{C}(\hat{\mathbf{q}}_{u, f})|}{|\mathcal{C}(\mathbf{q}_{u, f}) \cup \mathcal{C}(\hat{\mathbf{q}}_{u, f})|}, \,\, \forall \,\, \mathcal{C}(\mathbf{q}_{u, f}) \cup \mathcal{C}(\hat{\mathbf{q}}_{u, f}) \neq \emptyset,
    \end{equation} 
where $\cap$ and $\cup$ denote the intersection and the union of two sets, respectively, and~$|\cdot|$ represents the cardinality of a set. The smaller the value of~$h_{u,f}$, the greater the deviation of the rendered virtual content from an MAR user's actual viewpoint, thereby diminishing the user's immersive experiences. 

Let~$\mathcal{F}^\text{r}_{u}$ denote the set of camera frames requiring annotation rendering at MAR device~$u$. To ensure the quality of experience (QoE) in annotation rendering, the VCHR of MAR device~$u \in \mathcal{U}$ needs to satisfy the following probabilistic constraint: 

    \begin{equation}\label{eq7}
        P \left( \left( \frac{1}{|\mathcal{F}^\text{r}_{u}|} \sum_{f \in \mathcal{F}^\text{r}_{u}}{\mathbf{1}\left( h_{u,f}  \ge V_{u} \right)}  \right) \ge \rho \right) \ge \varepsilon,
        % P ( V \ge h_{u,f} ) \ge \varepsilon, \,\,\,\, \forall f \in \mathcal{F}_{u}, u \in \mathcal{U},
    \end{equation}
where~$V_{u}$ represents the maximum tolerable VCHR experienced by the user through MAR device~$u$ during annotation rendering,~$\mathbf{1}(\cdot)$ is an indicator function,~$\rho$ denotes the ratio of camera frames satisfying the VCHR requirement to the total number of rendered camera frames, and $\varepsilon \in [0, 1]$ represents the required reliability in communication service provision.

\subsection{Problem Formulation}

From a communication and networking perspective, we formulate a QoE-aware communication service provision problem for MAR applications with the objective of minimizing the usage of radio spectrum resource allocated to MAR devices, as follows:

    \begin{subequations}\label{p1}
        \begin{align}
            \textrm{P1:} &\,\, \min_{ \{ b_{u} \}_{u \in \mathcal{U}} } \sum_{ u \in \mathcal{U} }{ b_{u} }\\
            \textrm{s.t.} &\,\, \eqref{eq1} - \eqref{eq7},\\
            & \,\, \forall b_{u} \in \mathbb{R},
        \end{align}
    \end{subequations}
where the optimization variable~$b_{u}$ represents the amount of the radio spectrum resource allocated for MAR devices, and~\eqref{eq7} serves as the primary constraint to ensure the QoE,~i.e., VCHR, for individual MAR devices. 

Problem~P1 is intractable in existing communication networks for two reasons. First, from the communication and networking perspective, the network controller currently lacks the MAR application-specific information necessary for QoE-aware network resource management stemming from the separation between network management and application development domains. As a result, the quantitative impact of radio spectrum resource allocation decisions, i.e.,~$b_{u}$, on the QoE value is unknown~\emph{a priori}. Second, the limited data analysis capabilities of current communication networks pose challenges to differentiating QoE models for individual MAR users. Thus, mathematically satisfying the QoE constraints~\eqref{eq7} for each MAR user through network resource optimization becomes infeasible. 
% To address these two challenges, we develop digital user agents for MAR users, which enable user-specific QoE modeling in communication networks and thus support QoE-aware communication service provisioning.  

\section{The Developed Approach}

In this section, we develop a novel approach to address the aforementioned two challenges.  

\subsection{Developed Digital Twin}

To enable QoE-oriented communication service provision (QoE-CSP) for MAR, the network controller requires new network functionalities: 1) a capability to provide MAR application-related information, allowing the network controller to account for the impact of MAR operational mechanisms in network resource management; and 2) a capability to differentiate the service demands of individual MAR users through advanced user data analysis, enabling personalized network resource management. To this end, we introduce an entity called the digital twin (DT), which is established for each individual MAR device. Serving as a proxy~\cite{ma2024from,song2024canal}, the DT can provide the network controller with application-related and user-specific information about the respective MAR device, thereby supporting QoE-CSP. Specifically, we design the following two functions for each DT, each computationally providing the required application-related and user-specific information~\cite{zhou2024user_wcm,cheng2024toward}.    

\subsubsection{Application-related information extraction}

As mentioned in Subsection~II.C, the predicted poses~$\hat{\mathbf{q}}_{u, f+W}$, obtained via the predictor $G(\cdot)$, are crucial for annotation rendering in MAR, it is essential to incorporate the MAR operational mechanism, more specifically the details of $G(\cdot)$, into communication service provision to ensure that the QoE requirements of MAR users are met. Since MAR application-related information may not accessible to the network controller, the specific details of the predictor $G(\cdot)$ used in~\eqref{eq4} may not be publicly available for radio spectrum resource management. 

This function involves replicating the MAR application-related behavior from a networking and communication perspective. Instead of focusing on the human behavior of MAR users, it clones the behavior associated with the operational mechanisms of the MAR application. Define~$\hat{G}(\cdot; \boldsymbol{\vartheta})$ as the cloned device pose predictor parameterized by~$\boldsymbol{\vartheta}$, which is obtained by minimizing the following loss function:  
    \begin{equation}\label{}
        L(\boldsymbol{\vartheta}) = \frac{1}{|\Xi|} \sum_{(\hat{\mathbf{q}}_{u, f+W},\mathcal{Q}_{u,f}) \in \Xi} \left(\hat{\mathbf{q}}_{u, f+W} - \hat{G}(\mathcal{Q}_{u,f}; \boldsymbol{\vartheta}) \right)^{2},
    \end{equation}
where~$\Xi$ represents the set of device pose trajectories for all MAR devices. The core of this function is to accurately replicate the performance of device pose prediction as it is actually utilized in MAR, rather than focusing on improving pose prediction accuracy. As suggested in~\cite{han2020vivo}, a linear regression technique is employed to construct the predictor~$G$ for annotation rendering, we adopt the same technique for cloning. If specific details about the employed technique are unavailable, $\hat{G}(\cdot; \boldsymbol{\vartheta})$ can be approximated using more general machine learning methods.   

\subsubsection{User QoE modeling}

To capture the unique impact of a radio spectrum resource management decision on the QoE value, i.e., VCHR, we define a user-specific function as~$\Omega_{u} (\cdot)$. Given the input of $b_{u}$ and~$\hat{\mathbf{q}}_{u, f}$, the probability of $h_{u,f} = y, \forall y \in \left[0,1\right]$, is estimated as follows:
    \begin{equation}\label{}
        \hat{P}(h_{u,f} = y | b_{u}, \hat{\mathbf{q}}_{u, f}) =  \Omega_{u} \left(b_{u}, \hat{\mathbf{q}}_{u, f}; \boldsymbol{\theta}_{u} \right), \,\,\,\, \forall f \in \mathcal{F}_{u}, u \in \mathcal{U},
    \end{equation}
where $\sum_{\forall y}{\hat{P}(h_{u,f} = y | b_{u}, \hat{\mathbf{q}}_{u, f})} = 1$, and~$\boldsymbol{\theta}_{u}$ represents the parameters of the mapping function~$\Omega_{u} (\cdot)$ corresponding to MAR device~$u$. Given function~$\Omega_{u} (\cdot)$, the probability of camera frame~$f \in \mathcal{F}^\text{r}_{u}$ that is rendered to meet the VCHR requirement, can be estimated as follows:
   \begin{subequations}\label{eq10}
        \begin{align}
            \hat{p}_{u,f} & = \int _{y \ge V_{u}}{\hat{P}(h_{u,f} = y | b_{u}, \hat{\mathbf{q}}_{u, f})}\\
            & = \int_{y \ge V_{u}, y \in \mathbb{R}}{\Omega_{u} \left(b_{u}, \hat{\mathbf{q}}_{u, f}; \boldsymbol{\theta}_{u} \right)},\\
            & \approx \sum_{y \ge V_{u}, y \in \mathbb{Z}}{\Omega_{u} \left(b_{u}, \hat{\mathbf{q}}_{u, f}; \boldsymbol{\theta}_{u} \right)},
        \end{align}
    \end{subequations}
where we output a probability mass function of~$\hat{P}(h_{u,f} = y | b_{u}, \hat{\mathbf{q}}_{u, f})$,~as shown in~(\ref{eq10}c), by utilizing a deep neural network with a soft-max activation function. Since QoE requirements, i.e.,~$V_{u}$, may vary across MAR devices, the function~$\Omega_{u} (\cdot;  \boldsymbol{\theta}_{u})$ is uniquely trained for each MAR device using its respective user data such as device pose trajectories.

\subsection{QoE-oriented Communication Service Provision}

Based on the two functions involved in each DUA, we can quantitatively capture the relationship between radio spectrum resource management and user QoE. The following Theorem~\ref{theorem1} allows us to re-write~\eqref{eq7} as~\eqref{eq11} when the number of camera frames required for annotation rendering is large. 
    \begin{theorem}\label{theorem1}
    For MAR decvie~$u$, given the estimated probability~$\hat{p}_{u,f}$, QoE constraint~\eqref{eq7} can be re-written as follows: 
        \begin{equation}\label{eq11}
            \frac{|\mathcal{F}^\text{r}_{u}| \rho -  \sum_{f \in \mathcal{F}^\text{r}_{u}}{\hat{p}_{u,f}} }{ \sqrt{\sum_{f \in \mathcal{F}^\text{r}_{u}}{\hat{p}_{u,f} (1 - \hat{p}_{u,f})}} } \leq \Phi^{-1}(1-\varepsilon), 
        \end{equation}
    where~$\Phi^{-1}$ is the inverse standard normal cumulative distribution function.
    \end{theorem}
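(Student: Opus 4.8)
The plan is to recognize the empirical average inside~\eqref{eq7} as a normalized sum of independent Bernoulli variables and then invoke a central limit theorem to replace the probabilistic constraint by the deterministic surrogate~\eqref{eq11}. First I would fix an MAR device~$u$ and, for each $f \in \mathcal{F}^\text{r}_{u}$, set $Y_{u,f} := \mathbf{1}(h_{u,f} \ge V_{u})$. The DT QoE model underlying~\eqref{eq10} treats $Y_{u,f}$ as a Bernoulli random variable with success probability $\hat{p}_{u,f}$, so that $\mathbb{E}[Y_{u,f}] = \hat{p}_{u,f}$ and $\mathrm{Var}(Y_{u,f}) = \hat{p}_{u,f}(1-\hat{p}_{u,f})$. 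Writing $X_u := \sum_{f \in \mathcal{F}^\text{r}_{u}} Y_{u,f}$, the event appearing in~\eqref{eq7} is precisely $\{X_u \ge |\mathcal{F}^\text{r}_{u}|\rho\}$, and, under the working assumption that the per-frame hit/miss outcomes are mutually independent, $\mathbb{E}[X_u] = \sum_{f} \hat{p}_{u,f}$ and $\mathrm{Var}(X_u) = \sum_{f} \hat{p}_{u,f}(1-\hat{p}_{u,f})$.

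Second, I would standardize $X_u$ and write $P(X_u \ge |\mathcal{F}^\text{r}_{u}|\rho) = P\big( (X_u - \mathbb{E}[X_u])/\sqrt{\mathrm{Var}(X_u)} \ge z_u \big)$ with $z_u := (|\mathcal{F}^\text{r}_{u}|\rho - \sum_f \hat{p}_{u,f})/\sqrt{\sum_f \hat{p}_{u,f}(1-\hat{p}_{u,f})}$. Since the summands are independent but not identically distributed, the relevant tool is the Lindeberg--Feller CLT; for uniformly bounded Bernoulli terms its hypothesis reduces to $\sum_f \hat{p}_{u,f}(1-\hat{p}_{u,f}) \to \infty$, which holds as soon as a non-vanishing fraction of the $\hat{p}_{u,f}$ stays bounded away from $0$ and $1$ as $|\mathcal{F}^\text{r}_{u}|$ grows --- exactly the large-sample regime invoked in the statement. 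Under this condition the standardized sum is approximately standard normal, hence $P(X_u \ge |\mathcal{F}^\text{r}_{u}|\rho) \approx 1 - \Phi(z_u)$. Imposing $1 - \Phi(z_u) \ge \varepsilon$ and using that $\Phi$ is continuous and strictly increasing yields $\Phi(z_u) \le 1-\varepsilon$, i.e. $z_u \le \Phi^{-1}(1-\varepsilon)$, which is~\eqref{eq11}.

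The main obstacle is the independence assumption on the events $Y_{u,f}$: consecutive camera frames are temporally correlated through the device trajectory, so a fully rigorous argument would need a CLT for weakly dependent sequences (e.g.\ $m$-dependent or strongly mixing), or an argument that the uniform down-sampling defining $\mathcal{F}^\text{r}_{u}$ decorrelates the retained frames enough that the displayed mean and variance remain valid to leading order. I would therefore present independence as a modeling assumption rather than a hypothesis of the theorem. A secondary, minor point is that $\hat{p}_{u,f}$ is the DT's estimate produced by $\Omega_{u}(\cdot)$ rather than the true conditional hit probability, so the equivalence in Theorem~\ref{theorem1} is to be read with respect to the DT-induced distribution; likewise any integer/continuity correction to $X_u$ is $O(1/\sqrt{|\mathcal{F}^\text{r}_{u}|})$ and is absorbed into the large-sample approximation.
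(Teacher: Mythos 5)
Your proof follows essentially the same route as the paper's: model each indicator $\mathbf{1}(h_{u,f}\ge V_{u})$ as an independent Bernoulli with mean $\hat{p}_{u,f}$, apply the CLT for independent non-identically distributed summands to approximate the sum by a normal variable, and invert $\Phi$ to obtain~\eqref{eq11}. Your additional remarks on the Lindeberg--Feller condition and on frame-to-frame dependence are sensible refinements, but the core argument matches the paper's Appendix~A.
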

    \begin{proof}   
        See Appendix~A.
    \end{proof}
Based on Theorem~\ref{theorem1}, we can leverage the application-related and user-specific information provided by each DUA to enable QoE-CSP for individual MAR devices. As shown in Algorithm~\ref{alg1}, we propose a QoE-CSP algorithm to address Problem~P1. In Lines~3 to~8, for each camera frame required for annotation rendering on MAR device~$u$, i.e.,~$\mathcal{F}^\text{r}_{u}$, we predict the corresponding device pose and calculate the associated probability~$\hat{p}_{u,k}$ using~\eqref{eq10}. In Lines~9 to~16, we heuristically determine the minimum amount of radio spectrum resource required by each MAR device, i.e.,~$b^{*}_{u}$, to ensure the timely uploading of sufficient camera frames to the edge server. This heuristic method is based on the observation that an increase in radio spectrum resources positively impacts the performance of device pose prediction and, consequently, the VCHR, although the extent of this impact may vary across different MAR devices. Apart from a potential constraint on the total available radio spectrum resources for concurrent uplink transmissions at the BS, the communication service provision for different MAR devices in Problem~P1 is independent. Thus, by using Algorithm~\ref{alg1}, we can determine the minimum radio spectrum resource for all MAR devices while meeting their respective QoE requirements.

    \begin{algorithm}[t] 
        \caption{QoE-CSP Algorithm}\label{alg1}
        \LinesNumbered
        \textbf{Input:} $\Omega(\cdot;\boldsymbol{\theta}_{u})$, $\hat{G}(\cdot)$, $\mathcal{F}^\text{r}_{u}, \delta$;\\
        \textbf{Initialization:} $k = K$, $b_{u} = b^\text{min}_{u}$;\\
        $\mathcal{Q}_{u,k}$ $\leftarrow$ prepare the input data used for predicting the device pose corresponding to camera frame~$k$;\\

        \For{$k \in \mathcal{F}^\text{r}_{u}$}
        {   

            $\hat{\mathbf{q}}_{u, k}$ $\leftarrow$ Predict the device pose corresponding to camera frame~$k$ based on predictor~$\hat{G}(\mathcal{Q}_{u,k})$;\\

            $\hat{p}_{u,k}$ $\leftarrow$ Calculate the probability using~\eqref{eq10};\\

            $\mathcal{Q}_{u,k'}$ $\leftarrow$ prepare the input data for camera frame~$k' \in \mathcal{F}^\text{r}_{u}$;\\

        }

        \For{$b_{u} \leq b^\text{max}_{u} $}
        {   

            \eIf{\eqref{eq11} is satisfied}
                {   
                    $b^{*}_{u} \leftarrow b_{u}$;\\
                    \textbf{Break};\\
                }
                {   
                    $b_{u} \leftarrow b_{u} + \delta$;\\
                    
                }

        }
        \textbf{Output:} $b^{*}_{u}$

    \end{algorithm}

\section{Performance Evaluation}

\subsection{Simulation Settings}

We use a public dataset (\url{https://github.com/Yong-Chen94/6DoF_Video_FoV_Dataset}) that records the pose traces of $40$ participants moving with six degrees of freedom while watching a volumetric video titled \textit{Longdress}, which is overlaid at a fixed location in the physical world. The video lasts $10$ seconds and has a fraqme rate of $30$ frames per second. Each frame is represented by a point cloud, which is equally segmented into $4\times4\times2$ cells. 
% At each pose collection frequency, as suggested in~\cite{han2020vivo}, the linear regression method is used for user pose prediction. Specifically, the user pose $10$ frames ahead is predicted according to the poses collected in the previous $5$ frames. 
Given the predicted and actual user pose in a frame, the cells that are in the viewing frustum and not occluded by the other cells are determined as the visible cells according to the algorithm developed in~\cite{han2020vivo}. 

\subsection{Performance of the DT-based Approach}
    \begin{figure}[t]
        \centering
        \includegraphics[width=0.3\textwidth]{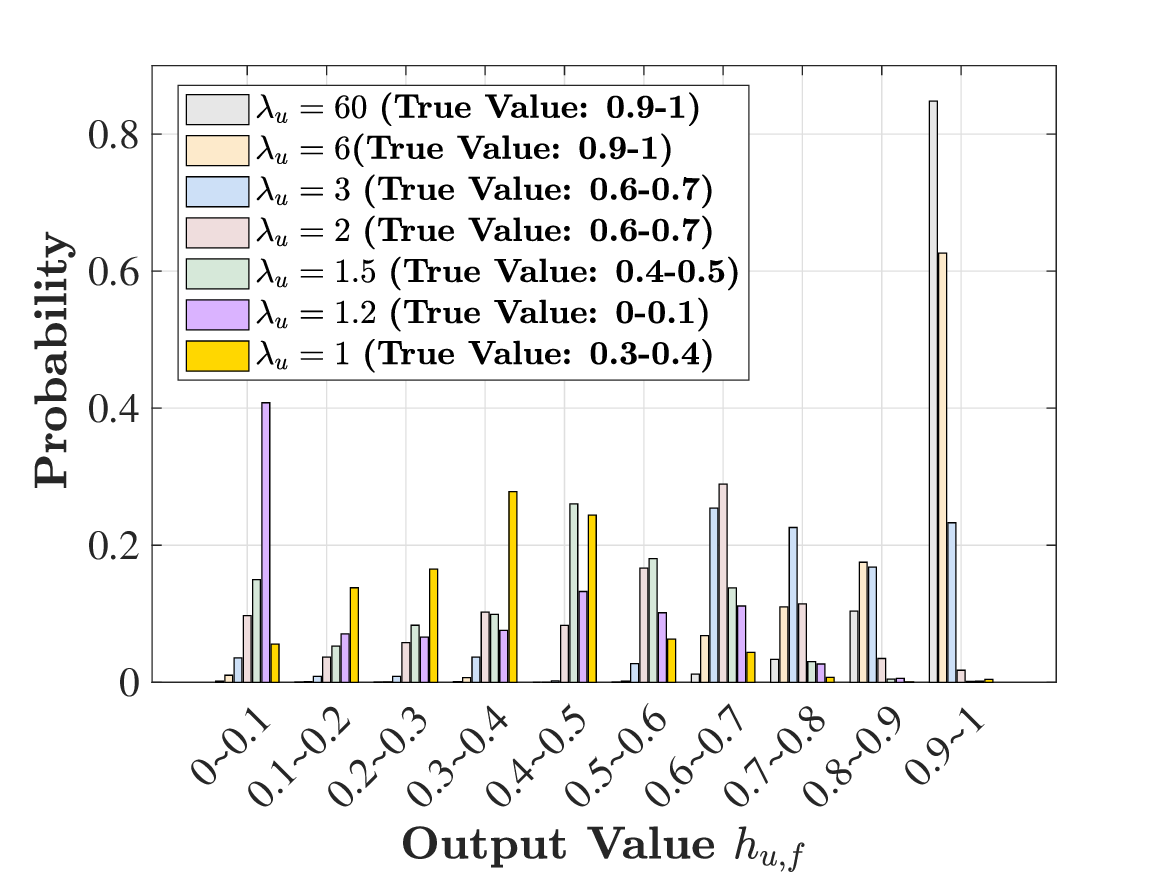}
        \caption{The output of the DT customized for User~\#7.}\label{fig1}
    \end{figure}

In Fig.~\ref{fig1}, we show the output of the DT, i.e., the predicted probability distribution of VCHR, customized from User~\#7 under different upload frequencies. We can observe that the customized DT effectively captures the relationship between upload frequency upload frequency, associated with the allocated radio spectrum, and the VCHR for User~\#7. Notably, as the upload frequency varies, the predicted VCHR shows a general downward trend, aligning closely with the true values. This result shows that the effectiveness of the DT-based approach in modeling QoE for individual users in MAR.

    \begin{figure}[t]
        \centering
        \includegraphics[width=0.3\textwidth]{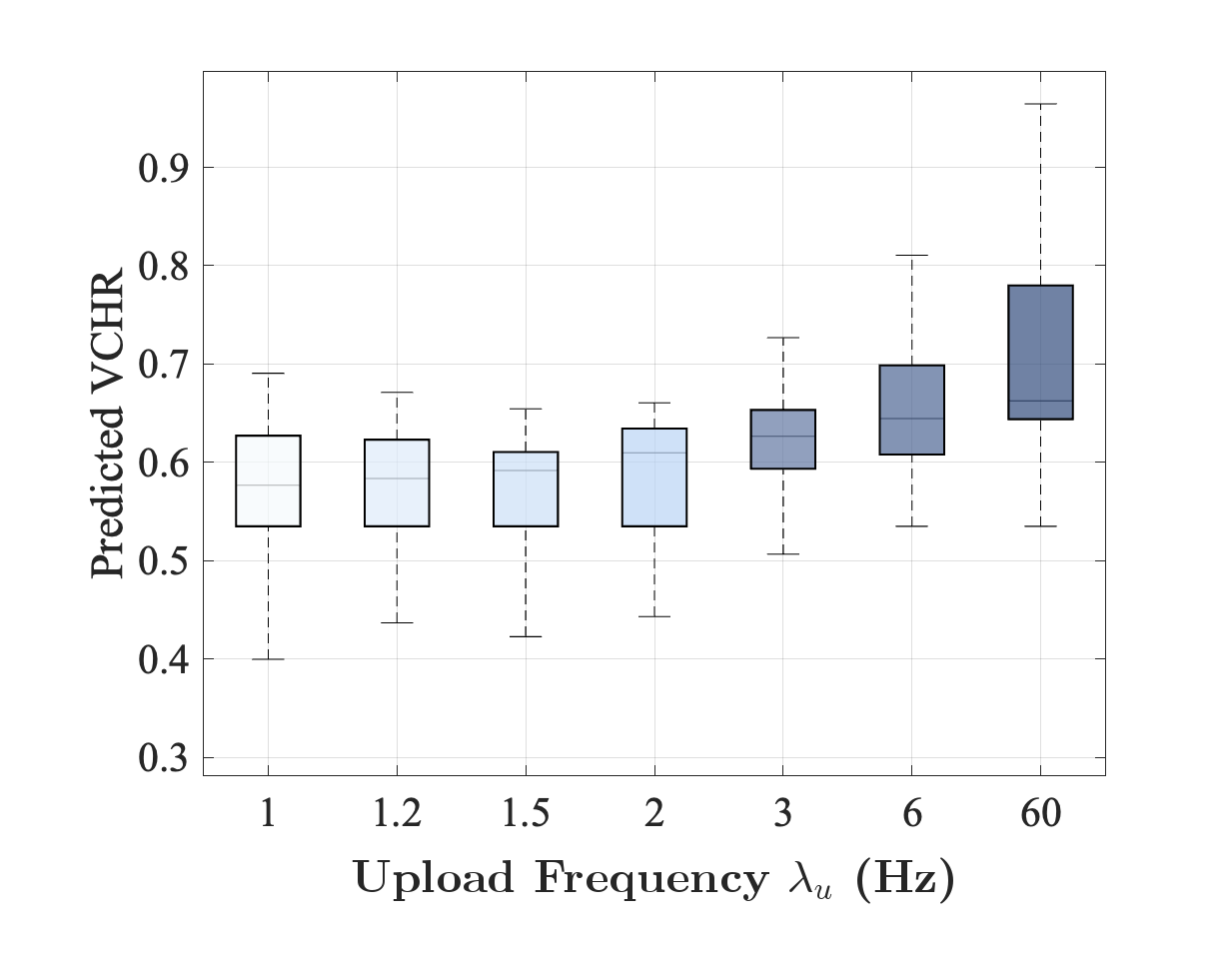}
        \caption{Predicted QoE value, i.e., VCHR, versus upload frequency for User~\#1.}\label{fig3}
    \end{figure}

In Fig.~\ref{fig3}, we show the QoE modeling performance for User~\#1. For each value upload frequency~$\lambda_{u}$, we show the predicted VCHR distribution across different camera frames currently rendered on the MAR device. We observe that the predicted VCHR for User~\#1 increases with higher upload frequencies, with the average VCHR stabilizing around~$3$\,Hz. This stabilization occurs because higher upload frequencies enable timely uploading of camera frames for device pose estimation, allowing the edge server to accurately predict subsequent MAR device poses for annotation rendering. Although allocating more radio spectrum to support higher upload frequencies enhances user QoE, the marginal improvement diminishes at higher frequencies. As a result, the customized DT for individual MAR device can guide network resource management to identify the optimal trade-off between radio spectrum resource consumption and user QoE.  

    \begin{figure}[t]
        \centering
        \includegraphics[width=0.3\textwidth]{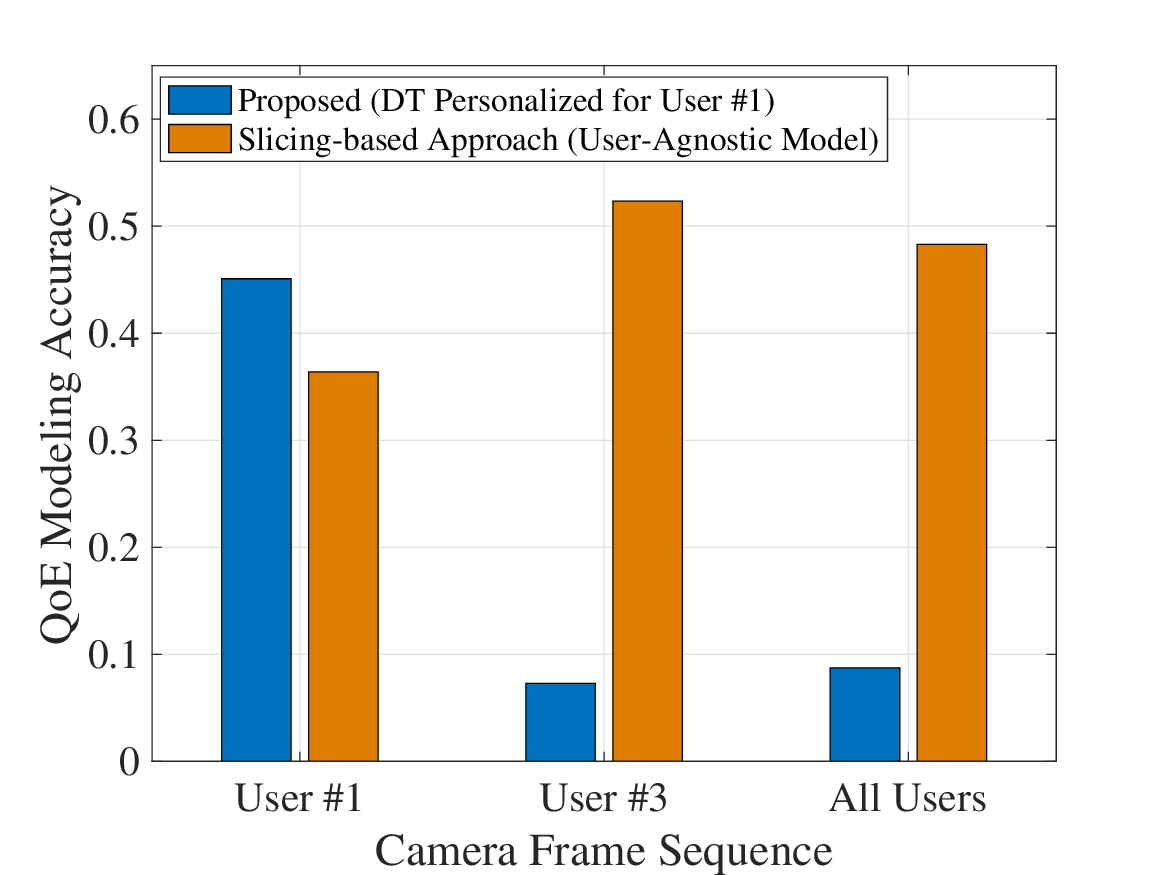}
        %\vspace{-0.3cm} % 调整负间距
        \caption{User QoE modeling accuracy comparison.}\label{fig2}
    \end{figure}

In Fig.~\ref{fig2}, we compare the QoE modeling accuracy of our DT-based approach with that of the network slicing-based approach employed in current 5G networks. Unlike the DT, which is customized to model the QoE for an individual user, the network slicing-based approach employs a user-agnostic QoE model trained on camera frame sequences aggregated from all users in MAR. The results show that the personalized DT for User~\#1 achieves higher QoE modeling accuracy on camera frame sequences from User~\#1 but performs worse on camera frame sequences from User~\#3 and the sequences aggregated from all users, compared to the slicing-based approach. This disparity arises because our DT-based approach focuses on data specific to an individual MAR device, effectively capturing differences in user movement and the surrounding physical environment. 

% the accuracy of two models in QoE modeling: a personalized DT (Digital Twin) model for User \#1, represented by blue bars, and a traditional network-slicing-based generic user model, represented by orange bars. The QoE accuracy of these two models is evaluated across three camera frame sequences. The figure shows that the personalized DT model for User \#1 outperforms the generic user model on the camera frame sequence of User \#1. This is because the DT model captures the unique impact of radio spectrum resource management on the VCHR of a single user. Compared to traditional user-agnostic methods, the DT model provides a higher level of granularity in QoE modeling.However, when validating the DT model on the camera frame sequences of User \#3 and the entire user set, its accuracy decreases. This is due to the lack of personalized training for these users, highlighting the dependency of the DT model on user-specific customization. For effective application, the DT model requires personalization for each individual user.In contrast, the traditional network-slicing-based generic user model does not consider individual differences. While it maintains relatively high accuracy for User \#3 and the entire user set, it performs worse than the personalized DT model on specific camera frame sequences tailored for a particular user.

\vspace{0.15cm}
\section{Conclusion}

% In this paper, we have developed a data-driven service provision approach based on the M-UDT to support customized user experiences in edge-assisted MAR. In the M-UDT, the established hierarchical data model organizes the factors affecting user-specific data traffic, and the designed UDT functions enable the switching between two data-driven traffic models to adapt to non-stationary data traffic. Simulation results have demonstrated the effectiveness of the developed M-UDT-based data-driven approach in reducing spectrum resource consumption while satisfying the delay requirement of camera frame uploading due to high modeling accuracy. Our approach provides a scalable and flexible paradigm to characterize the intricate impacts of MAR operational mechanisms on user-specific resource demands, which facilitates the shift to user-centric service provision in the 6G era. In the future, we plan to incorporate service provision for multiple MAR devices with diverse camera frame uploading mechanisms.

In this paper, we have proposed a digital twin (DT)-based communication service provision approach to optimize radio spectrum resource usage while ensuring user QoE, i.e., virtual content hit rate, for annotation rendering in edge-assisted mobile augmented reality (MAR). The developed DT, tailored for MAR applications, enables replicating key annotation rendering mechanisms and capturing the user-specific relationship between QoE and communication resource demands, improving the granularity of user QoE modeling. Leveraging these DTs, the proposed QoE-oriented resource allocation algorithm improves radio spectrum efficiency while satisfying diverse QoE requirements compared to conventional network slicing-based approaches used in 5G networks. This work has introduced a fine-grained paradigm for user QoE modeling, setting the stage for more personalized and efficient service provisioning for MAR rendering in 6G networks.
\vspace{0.15cm}
\appendix

\subsection{Proof of Theorem~\ref{theorem1}}\label{appendix:the1} 
\begin{proof}

We define random variable~$S = \sum_{f \in \mathcal{F}^\text{r}_{u}}{\mathbf{1}\left( h_{u,f}  \ge V_{u} \right)}$. The left-hand side of inequality~\eqref{eq7} can be re-written as follows: 
       \begin{equation}\label{}
            \hat{P}\left( \left(\frac{1}{|\mathcal{F}^\text{r}_{u}|} \sum_{f \in \mathcal{F}^\text{r}_{u}}{\mathbf{1}\left( h_{u,f}  \ge V_{u} \right)} \right) \ge \rho \right) = P (S \ge |\mathcal{F}^\text{r}_{u}| \rho ).
       \end{equation}
For each camera frame~$f \in \mathcal{F}^\text{r}_{u}$,~$\mathbf{1}\left( h_{u,f}  \ge V_{u} \right)$ is a Bernoulli random variable with mean~$\mathbb{E}[\mathbf{1}\left( h_{u,f}  \ge V_{u} \right)] = \hat{p}_{u,f}$. Random variable~$S$ is the sum of $|\mathcal{F}^\text{r}_{u}|$ independent but not identically distributed Bernoulli random variables. The mean and variance of $S$ are $\mathbb{E}[S] = \sum_{f \in \mathcal{F}^\text{r}_{u}}{\hat{p}_{u,f}}$ and $\sigma^{2} = Var(S) = \sum_{f \in \mathcal{F}^\text{r}_{u}}{\hat{p}_{u,f} (1 - \hat{p}_{u,f})}$, respectively.

When the number of camera frames, i.e.,~$|\mathcal{F}^\text{r}_{u}|$, is large, random variable~$S$ can be approximated by a normal distribution based on the central limit theorem as follows:
       % \begin{equation}\label{}
       %     P (S \ge |\mathcal{F}^\text{r}_{u}| \rho ) \approx P (Z \ge \frac{|\mathcal{F}^\text{r}_{u}| \rho - \mathbb{E}[S]}{\sigma}), 
       % \end{equation}
    % \begin{equation}\label{}
    %    \hat{P}\left( \left(\frac{1}{|\mathcal{F}^\text{r}_{u}|} \sum_{f \in \mathcal{F}^\text{r}_{u}}{\mathbf{1}\left( h_{u,f}  \ge V \right)} \right) \ge \rho \right) \approx   1 - \Phi \left( \frac{|\mathcal{F}^\text{r}_{u}| \rho -  \sum_{f \in \mathcal{F}^\text{r}_{u}}{\hat{p}_{u,f}} }{ \sqrt{\sum_{f \in \mathcal{F}^\text{r}_{u}}{\hat{p}_{u,f} (1 - \hat{p}_{u,f})}} } \right), 
    % \end{equation}

    \begin{subequations}\label{}
        \begin{align}
            P (S \ge |\mathcal{F}^\text{r}_{u}| \rho ) & \approx P (Z \ge \frac{|\mathcal{F}^\text{r}_{u}| \rho - \mathbb{E}[S]}{\sigma})\\
            & = 1 - \Phi \left( \frac{|\mathcal{F}^\text{r}_{u}| \rho -  \sum_{f \in \mathcal{F}^\text{r}_{u}}{\hat{p}_{u,f}} }{ \sqrt{\sum_{f \in \mathcal{F}^\text{r}_{u}}{\hat{p}_{u,f} (1 - \hat{p}_{u,f})}} } \right),
        \end{align}
    \end{subequations}
where~$Z$ represents a standard normal random variable, and~$\Phi(\cdot)$ is the cumulative distribution function of the standard normal distribution. Thus, \eqref{eq7} can be re-written as~\eqref{eq11}.

\end{proof}

\vspace{0.1cm}
\section*{Acknowledge}
This work was supported by the National Key Research and Development Program of China under Grant 2022YFB2901900.

% % \balance

% \section*{Acknowledgment}
% % \vspace{-0.1cm}
% \bl{This work is supported by the Natural Sciences and Engineering Research Council (NSERC) of Canada.}
% National Natural Science Foundation of China (NSFC) under Grant No. 91638204 and

\bibliography{ref_AR3}

% Generated by IEEEtran.bst, version: 1.14 (2015/08/26)
\begin{thebibliography}{10}
\providecommand{\url}[1]{#1}
\csname url@samestyle\endcsname
\providecommand{\newblock}{\relax}
\providecommand{\bibinfo}[2]{#2}
\providecommand{\BIBentrySTDinterwordspacing}{\spaceskip=0pt\relax}
\providecommand{\BIBentryALTinterwordstretchfactor}{4}
\providecommand{\BIBentryALTinterwordspacing}{\spaceskip=\fontdimen2\font plus
\BIBentryALTinterwordstretchfactor\fontdimen3\font minus
  \fontdimen4\font\relax}
\providecommand{\BIBforeignlanguage}[2]{{%
\expandafter\ifx\csname l@#1\endcsname\relax
\typeout{** WARNING: IEEEtran.bst: No hyphenation pattern has been}%
\typeout{** loaded for the language `#1'. Using the pattern for}%
\typeout{** the default language instead.}%
\else
\language=\csname l@#1\endcsname
\fi
#2}}
\providecommand{\BIBdecl}{\relax}
\BIBdecl

\bibitem{cheng2024toward}
N.~Cheng, X.~Wang, Z.~Li, Z.~Yin, T.~Luan, and X.~Shen, ``Toward enhanced
  reinforcement learning-based resource management via digital twin:
  {Opportunities}, applications, and challenges,'' \emph{IEEE Netw.}, pp. 1--7,
  2024, to be published, doi:10.1109/MNET.2024.3438543.

\bibitem{chen2023adaptslam}
Y.~Chen, H.~Inaltekin, and M.~Gorlatova, ``{AdaptSLAM}: {Edge}-assisted
  adaptive {SLAM} with resource constraints via uncertainty minimization,'' in
  \emph{Proc. IEEE INFOCOM}, 2023, New York, NY, USA.

\bibitem{zhou2024user_wcm}
C.~Zhou, S.~Hu, J.~Gao, X.~Huang, W.~Zhuang, and X.~Shen, ``User-centric
  immersive communications in {6G}: {A} data-oriented approach via digital
  twin,'' \emph{arXiv:2411.05184}, [Online]. Available:
  https://doi.org/10.48550/arXiv.2410.02688.

\bibitem{huzaifa2021illixr}
M.~Huzaifa, R.~Desai, S.~Grayson, X.~Jiang, Y.~Jing, J.~Lee, F.~Lu, Y.~Pang,
  J.~Ravichandran, F.~Sinclair \emph{et~al.}, ``{ILLIXR}: {Enabling} end-to-end
  extended reality research,'' in \emph{Proc. IEEE IISWC}, 2021, Storrs, CT,
  USA.

\bibitem{ma2023nomore}
X.~Ma, Q.~Zeng, H.~Chi, and L.~Luo, ``No more companion {Apps} hacking but one
  dongle: {Hub-based} blackbox fuzzing of {loT} firmware,'' in \emph{Proc.~ACM
  MobiSys}, 2023, Helsinki, Finland.

\bibitem{hu2023adaptive}
S.~Hu, M.~Li, J.~Gao, C.~Zhou, and X.~Shen, ``Adaptive device-edge
  collaboration on {DNN} inference in {AIoT}: {A} digital twin-assisted
  approach,'' \emph{IEEE IoT J.}, vol.~11, no.~7, pp. 12\,893--12\,908, 2023.

\bibitem{ran2020multi}
X.~Ran, C.~Slocum, Y.-Z. Tsai, K.~Apicharttrisorn, M.~Gorlatova, and J.~Chen,
  ``Multi-user augmented reality with communication efficient and spatially
  consistent virtual objects,'' in \emph{Proc. ACM CoNEXT}, 2020, New York, NY,
  USA.

\bibitem{zhou2024digital}
C.~Zhou, J.~Gao, M.~Li, N.~Cheng, X.~Shen, and W.~Zhuang, ``Digital twin-based
  {3D} map management for edge-assisted device pose tracking in mobile {AR},''
  \emph{IEEE IoT J.}, vol.~11, no.~10, pp. 17\,812--17\,826, 2024.

\bibitem{linowes2017augmented}
J.~Linowes and K.~Babilinski, \emph{Augmented reality for developers: {Build}
  practical augmented reality applications with {Unity}, {ARCore}, {ARKit}, and
  {Vuforia}}.\hskip 1em plus 0.5em minus 0.4em\relax Packt Publishing Ltd,
  2017.

\bibitem{pan2024quality}
G.~Pan, S.~Xu, S.~Zhang, X.~Chen, and Y.~Sun, ``Quality of experience oriented
  cross-layer optimization for real-time {XR} video transmission,'' \emph{IEEE
  Trans. Circuits Syst. Video Technol.}, vol.~34, no.~8, pp. 7742--7755, 2024.

\bibitem{feng2023qoe}
J.~Feng, L.~Liu, X.~Hou, Q.~Pei, and C.~Wu, ``{QoE} fairness resource
  allocation in digital twin-enabled wireless virtual reality systems,''
  \emph{IEEE J. Sel. Areas Commun.}, vol.~41, no.~11, pp. 3355--3368, 2023.

\bibitem{ren2020edge}
P.~Ren, X.~Qiao, Y.~Huang, L.~Liu, C.~Pu, S.~Dustdar, and J.~Chen, ``{Edge}
  {AR} {X5}: {An} edge-assisted multi-user collaborative framework for mobile
  web augmented reality in {5G} and beyond,'' \emph{IEEE Trans. Cloud Comput.},
  vol.~10, no.~4, pp. 2521--2537, 2020.

\bibitem{campos2021orb}
C.~Campos, R.~Elvira, J.~J.~G. Rodr{\'\i}guez, J.~M. Montiel, and J.~D.
  Tard{\'o}s, ``{ORB-SLAM3}: {An} accurate open-source library for visual,
  visual--inertial, and multimap {SLAM},'' \emph{IEEE Trans. Robot.}, vol.~37,
  no.~6, pp. 1874--1890, 2021.

\bibitem{han2020vivo}
B.~Han, Y.~Liu, and F.~Qian, ``Vivo: {Visibility}-aware mobile volumetric video
  streaming,'' in \emph{Proc. ACM MobiCom}, 2020, New York, NY, USA.

\bibitem{ma2024from}
X.~Ma, L.~Luo, and Q.~Zeng, ``From one thousand pages of specification to
  unveiling hidden bugs: {Large} language model assisted fuzzing of matter
  {IoT} devices,'' in \emph{Proc. USENIX Security}, 2024, Philadelphia, PA,
  USA.

\bibitem{song2024canal}
E.~Song, Y.~Song, C.~Lu, T.~Pan, S.~Zhang, J.~Lu, J.~Zhao, X.~Wang, X.~Wu,
  M.~Gao \emph{et~al.}, ``Canal mesh: {A} cloud-scale sidecar-free multi-tenant
  service mesh architecture,'' in \emph{Proc. ACM SIGCOMM}, 2024, Sydney,
  Australia.

\end{thebibliography}

\bibliographystyle{IEEEtran}

\end{document}